\documentclass{ifacconf}

\usepackage{graphicx}      % include this line if your document contains figures
\usepackage{natbib}        % required for bibliography
\usepackage{amsfonts}
\usepackage{amsmath}
\usepackage{xcolor}
\usepackage{algorithm}
\usepackage{algpseudocode}
\usepackage{multirow}
\usepackage{makecell}
\usepackage{subcaption}

\newcommand{\Rpos}{\mathbb{R}_{\geq 0}}
\newcommand{\R}{\mathbb{R}}

\newcommand{\argmax}[1]{\underset{#1}{\textrm{argmax}}}
\newcommand{\argmin}[1]{\underset{#1}{\textrm{argmin}}}

\makeatletter
\let\theoremstyle\undefined
\makeatother

\usepackage{amsthm}
\newtheorem{theorem}{Theorem}
\newtheorem{lemma}{Lemma}

\newtheorem{corollary}{Corollary}

\DeclareMathOperator{\diag}{diag}
\begin{document}
\begin{frontmatter}

\title{Radioactive Source Seeking using Bayesian Optimisation with Movement Penalty}

\thanks[footnoteinfo]{This work was supported by an Australian Government Research Training Program (RTP) Scholarship, the UoM Ingenium Program, and DMTC Ltd. [Project: 15.188 UAV-mounted Scintillator Sensor for Standoff Detection of Radiological and Nuclear (RN) materials]. The authors have prepared this paper in accordance with the intellectual property rights granted to partners from the original DMTC project.}

\author[Mech]{Lysander Miller} 
\author[Mech]{Joshua Keene}
\author[OS]{Jeremy M.\ C.\ Brown}
\author[Mech]{Airlie Chapman}
% \author[Third]{Third C. Author}

\address[Mech]{Department of Mechanical Engineering, The University of Melbourne, Parkville, VIC 3010 Australia (e-mail: lysanderm@student.unimelb.edu.au; joshua.keene@student.unimelb.edu.au; airlie.chapman@unimelb.edu.au)}
\address[OS]{Optical Sciences Centre, Department of Physics and Astronomy, Swinburne University of Technology, Hawthorn, VIC 3122 Australia (e-mail: jmbrown@swin.edu.au)}
% \address[Second]{Colorado State University, 
%    Fort Collins, CO 80523 USA (e-mail: author@lamar. colostate.edu)}
% \address[Third]{Electrical Engineering Department, 
%    Seoul National University, Seoul, Korea, (e-mail: author@snu.ac.kr)}

\begin{abstract}    % 50 to 100 words
The use of mobile robotics in radioactive source seeking has become an important part of modern radiation-safety practices, supporting timely mitigation of contamination risks and helping protect public health. However, measuring radiation is often time-consuming, rendering traditional gradient-based source-seeking methods less effective due to lower sample efficiency. This paper proposes a sample-efficient Bayesian-Optimisation source-seeking strategy that utilises a heteroscedastic Gaussian process surrogate to balance exploration and exploitation. Excessive inter-sample travel is discouraged through a movement switching cost. The strategy is shown to generate sublinear regret in the source-seeking task, while simulations demonstrate its effectiveness in localising radioactive sources.
\end{abstract}

\begin{keyword}
Source Seeking, Bayesian Optimisation, Gaussian Process, Gamma Ray Detection
%Five to ten keywords, preferably chosen from the IFAC keyword list.
\end{keyword}

\end{frontmatter}
%===============================================================================

\section{Introduction}
Uncontrolled radioactive sources pose serious threats to the health and safety of the public, emergency responders, and defence personnel. While handheld radiation detectors are commonly used to locate these sources, their operation often exposes personnel to radiation, which can lead to serious health risks such as radiation sickness. This exposure risk can be substantially mitigated by integrating radiation detectors onto mobile robotic platforms. In particular, this work focuses on Unmanned Aerial Vehicle (UAV)-mounted detectors for efficiently locating known gamma-ray-emitting sources across large and obstructed areas. 

UAV-mounted radiation detection systems are typically used to map radiation over large areas by following preplanned flight paths. Notable examples include unmanned helicopters surveying radiation levels around the Fukushima Daiichi nuclear power plant \citep{Sanada2015, Vetter2018} and fixed-wing aircraft mapping the Chernobyl exclusion zone \citep{Connor2020}. Although preplanned trajectories are effective for producing systematic coverage maps, their inflexibility makes them poorly suited to time-critical source-seeking scenarios in which an unknown radioactive source must be quickly located within a large region. In the absence of prior environmental knowledge, these trajectories cannot adaptively concentrate measurements in areas with elevated radiation levels. Actively prioritising radiation measurements in high-activity areas is crucial in time-sensitive scenarios where rapid localisation of radioactive sources is essential to minimise radiation exposure. Consequently, source-seeking tasks require guidance strategies that use radiation measurements directly to maximise environmental information and efficiently localise the source. 

\looseness=-1
Several algorithms have been proposed for source-seeking applications. Stochastic gradient-descent methods directly use local gradient estimates of the unknown function to drive the agent toward increasing signal values \citep{stochastic_source_seeking}. In contrast, extremum seeking control \citep{KRSTIC2000313, manzie2009extremum} infers the gradient indirectly by injecting a sinusoidal perturbation signal, typically providing improved robustness and performance guarantees when compared to direct gradient descent. However, both approaches rely on repeated local measurements collected along trajectories, typically requiring extensive averaging to obtain reliable gradient estimates. This increases the number of samples required and makes these methods prone to converging to local, rather than global, extrema in the presence of multiple peaks. Therefore, they may be less suitable for radioactive source seeking where measurements may be costly and the underlying radiation intensity may be non-convex due to complex Compton scattering effects \citep{Knoll2010}.

Bayesian Optimisation (BO) provides an alternative framework for source seeking when measurements are expensive. BO constructs a surrogate model of the unknown field from noisy observations and uses a sequential acquisition strategy to select informative measurement locations, balancing exploration and exploitation, to efficiently localise the field maximum. Furthermore, BO methods typically provide asymptotic global maximisation at the expense of additional computational complexity.

A common surrogate model in BO is the Gaussian Process (GP), an infinite-dimensional generalisation of multivariate normal distributions \citep{Rasmussen2006}. GPs offer a flexible framework for function approximation and provide an estimate of approximation error through the standard deviation measure, providing insight into areas of the function that are poorly approximated. In particular, GPs have been used for radiation field estimation applications. In \cite{Silveira2018} and \cite{Khuwaileh2020}, a GP is used to reconstruct 2D dose rate profiles in radioactive environments, while in \cite{Jung2024}, a GP is used to construct a radiation intensity map from collected radiation measurements. Furthermore, GP regression can compensate for noisy measurements, providing a mechanism to compensate for the heteroscedastic noise profile found in radiation measurements \citep{Knoll2010}. 

A popular example of BO that employs a GP surrogate is given by \cite{Srinivas2012} through the Gaussian Process Upper Confidence Bound (GP-UCB) strategy. This strategy sequentially selects measurements positions that balance both exploration and exploitation of the function, as evaluated via the standard deviation and mean of the GP, respectively. Critically, \cite{Srinivas2012} establishes a sublinear regret bound when using the GP-UCB strategy, demonstrating that the average measurement of the function asymptotically approaches the true function maximum. Hence, the GP-UCB algorithm is suitable for source seeking, with the added benefit of enabling area-wide mapping through the collected data. 

\looseness=-1
However, the GP-UCB strategy proposed in \cite{Srinivas2012} does not consider movement costs incurred between consecutive measurements. These costs may be significant for systems with limited mobility or those operating within a finite time horizon. This is particularly important for UAV-mounted radiation detection, where operations are constrained by battery life. Moreover, when high spatial resolution is required, excessive manoeuvring can reduce both the number and quality of measurements, potentially exhausting the battery before the desired resolution is achieved. To address this limitation, this paper proposes and incorporates a movement switching cost into the GP-UCB algorithm for radioactive source-seeking applications. The proposed strategy generates a sublinear regret, ensuring that the maximum radiation intensity is identified. 

The paper is structured as follows. Section~\ref{sec:background} first provides a general background on radiation detection physics. A general problem formulation, including the radiation sensor model, is presented in Section~\ref{sec:problem_statement}. Bayesian optimisiation and Gaussian processes with heteroscedastic noise are then introduced in Section~\ref{sec:gp_with_heteroscedastic_noise}. The proposed BO algorithm with movement costs and corresponding theoretical guarantees are presented in Section~\ref{sec:gp_ucb_with_switching}, with Section~\ref{sec:simulation_results} then providing empirical demonstration of the algorithm in simulated radioactive source-seeking scenarios. The paper is then concluded in Section~\ref{sec:conclusion}.

\section{Background}\label{sec:background}
\subsection{Radiation Detection Physics}
Gamma rays emitted by decaying radionuclides have quantised energies that are characteristic of the specific radionuclide. The precise measurement of these energies with a radiation detector, such as a scintillation-based detector, allows for isotope-specific radioactive source identification. Scintillation-based detectors typically consists of a scintillation crystal optically bonded to a Silicon PhotoMultiplier (SiPM) array. As shown in Figure~\ref{fig:gamma_ray_interactions}a, gamma rays interact inelastically with the scintillation crystal primarily through the three mechanisms:
\begin{description}
    \item[Photoelectric absorption:] A bound atomic electron absorbs the full energy of an incident gamma ray and is ejected from the atom.
    \item[Compton scattering:] An incident gamma ray scatters with an electron such that momentum is conserved.
    \item[Pair production:] For gamma-ray energies above 1.022 MeV, an incident gamma ray in the electric field of an atom can produce an electron-positron pair. The positron can annihilate with an electron in the surrounding material and produce two characteristic 511 keV gamma rays, known as annihilation gamma rays.
\end{description}
The dominant interaction depends on the energy of the incident gamma ray and the effective atomic number of the material in which the interaction occurs.

\begin{figure}[h!]
    \centering
    \includegraphics[width=0.59\columnwidth]{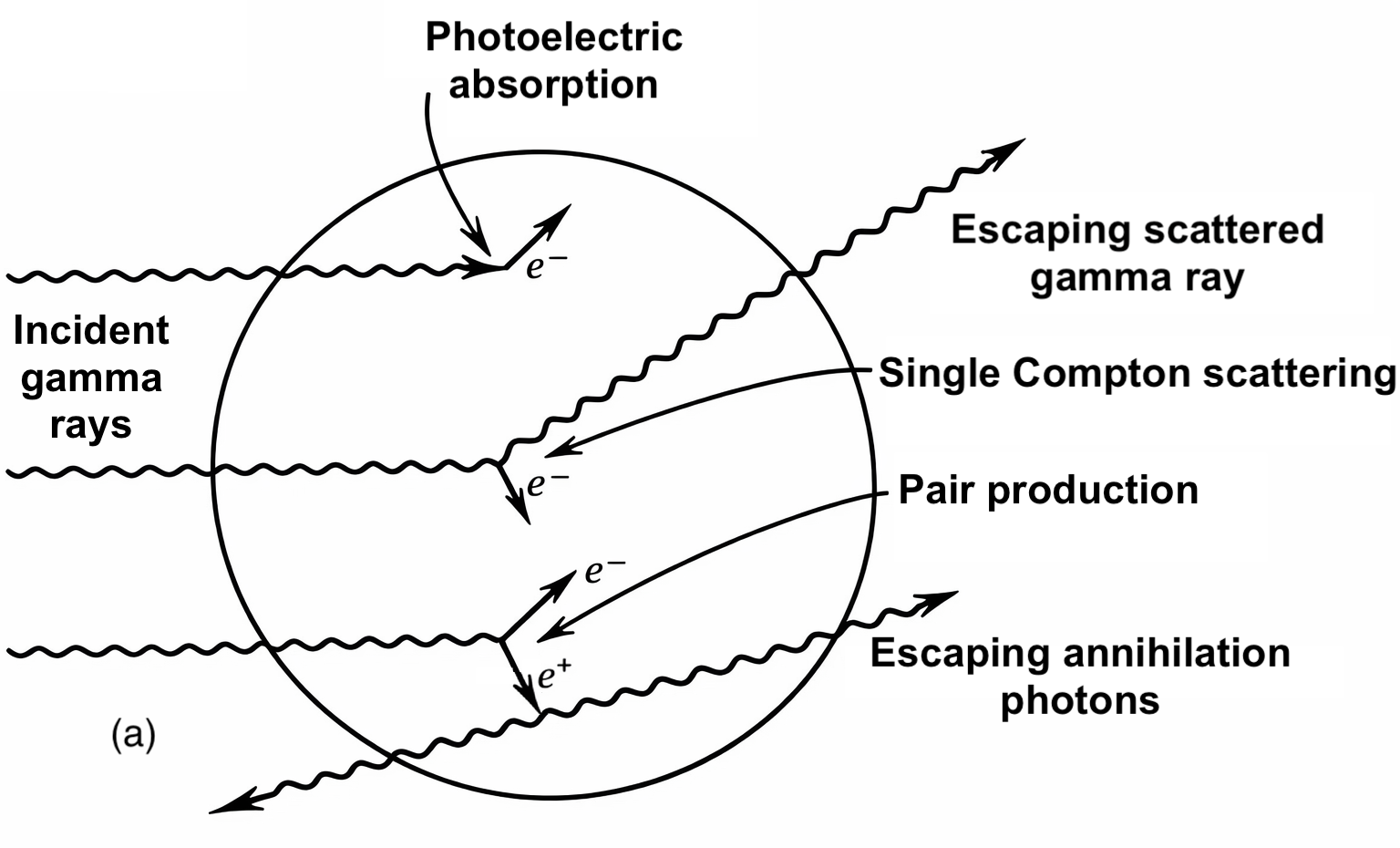}
    \includegraphics[width=0.39\columnwidth]{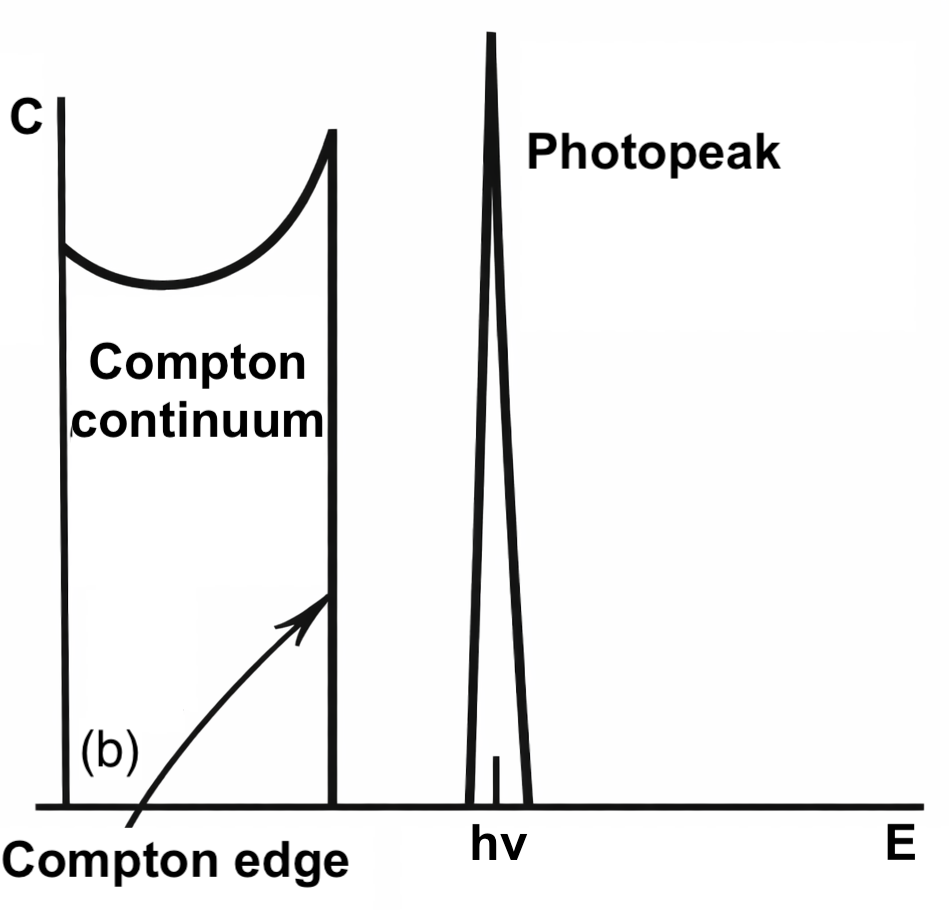}
    \caption{Gamma-ray interactions with a scintillation crystal (a). Energy spectrum resulting from the gamma-ray interactions within the scintillation crystal (b). Images adapted from \cite{Knoll2010}.}\label{fig:gamma_ray_interactions}
\end{figure}
\unskip

The interaction between the gamma ray and scintillation crystal results in the partial or complete transfer of the energy of the gamma ray into the electrons in the crystal lattice. These excited electrons de-excite to the valence band via the emission of optical photons from the scintillation material. The optical photons reach the SiPM array in the detector, producing an analog pulse with amplitude proportional to the number of optical photons detected. By calibrating the scintillation-based detector with known radioactive sources, an energy spectrum can be produced from the electrical signal over the detector measurement time. As shown in Figure~\ref{fig:gamma_ray_interactions}b, the energy spectrum is a plot of the number of gamma rays detected within discrete energy windows. 

A feature of the energy spectrum is the photopeak, which corresponds to the gamma rays that are photoelectrically absorbed by the scintillation crystal. The energy of the incident gamma ray can be inferred from the photopeak, while the number of gamma rays at that energy can be determined from the counts under the photopeak. The width of the photopeak is characteristic of the scintillation material. A higher energy resolution corresponds to a narrower photopeak width and indicates that the detector can more precisely measure the energy of the incident gamma ray. By summing the photopeak count for all characteristic gamma rays, the \textit{radionuclide count} can be calculated, which is the accumulation of all photoelectrically absorbed gamma rays emitted by the radionuclide.

\section{Problem Formulation}\label{sec:problem_statement}
This work seeks to localise an unknown radioactive point source by leveraging radionuclide count measurements collected with a UAV-mounted scintillation-based detector. This problem is motivated by previous search operations, such as the localisation of an uncontained $^{137}$Cs source in Western Australia \citep{ABCNews2023}. Let $\mathbb{N}$ and $\mathbb{R}$ denote the set of natural and real numbers, define $\mathbb{R}^d$ as the $d$-dimensional Euclidean space, and let $||x||_p$ denote the $p$-norm of $x \in \R^d$. Let $\mathcal{R} \subset \R^3$ denote the total space of interest within the source-seeking task and let $Q \doteq \{q \in \mathcal{R}  \ | \ [0,0,1]^Tq = 0\}$ denote the ground plane of $\mathcal{R}$ in which the radioactive point source is located. The point source, with true location at $q \in Q$, emits $N \in \mathbb{N}$ detectable characteristic gamma rays with an activity $\lambda \in \Rpos$ in becquerels (Bq), defined as one expected nuclear decay per second. See \cite{Miller2025} for further details. 

Let $\Omega \doteq \{q \in \mathcal{R} \ | \ [0,0,1]^Tq = h\}$ denote the UAV operating plane located in $\mathcal{R}$ at height $h\in \Rpos$. At iteration $t \in \mathbb{N}$ denote the set of all discrete measurement locations and corresponding measurement values as $\mathbf{x}_t = (x_1, ... , x_t) \in \Omega^t$ and $\mathbf{y}_t = (y(x_1), ... , y(x_t)) \in \mathbb{N}^t$ such that $y(x) \in \mathbb{N}$ denotes the measurement at position $x \in \Omega$. This work aims to drive the UAV towards the position of the unknown source $q \in Q$ by leveraging collected information $\mathbf{x}_t, \mathbf{y}_t$. Hence, we consider the sequential discrete path-planning task of the UAV such that at iteration $t \in \mathbb{N}$ measurement points are sampled according to the strategy $f: \Omega^t \times \mathbb{N}^t \to \Omega$ such that
\begin{align*}
    x_{t+1} \doteq f(\mathbf{x}_t, \mathbf{y}_t).
\end{align*}
We next formulate a radiation sensing model to assist in the design of a suitable sampling strategy $f(\cdot,\cdot)$. Measurements collected by the detector at $x \in \Omega$ over $\tau \in \Rpos$ seconds are represented by the sensor model $y: \Omega \to \mathbb{N}$. Here, $y(x)$ is the sum of the detected counts originating from the source $n(x) \in \mathbb{N}$ and background radiation, represented by the background Counts Per Second (CPS) function $B:\Omega\to\mathbb{N}$. The sensor model is given by
\begin{align}
    \label{eq:sensor_model}
    y(x) \doteq n(x) + B(x)\tau.
\end{align}

The number of source counts detected during a measurement period is influenced by both the quantity of gamma rays emitted from the source and the detector efficiency. An experimentally validated radionuclide detection efficiency model is provided by \cite{Miller2025}. For a point source located at $q \in Q$ with $N$ detectable characteristic gamma rays indexed by $k\in\{1, ..., N\}$, the radionuclide detection efficiency model $\hat{\phi}:\Omega \to \mathbb{R}$ is given by
\begin{equation}
   \hat{\phi}(x) \doteq \frac{r\eta}{2\pi||x-q||_2^2}\sum_{k=1}^{N}\alpha_k\epsilon_k\exp(-\mu_{\text{air}, k}||x-q||_2),\label{eq:radiation_field_model}
\end{equation}
where $r, \eta \in \Rpos$ are the radius and height of a cylindrical scintillation crystal in the radiation detector. For the $k$th gamma ray, $\alpha_k$ is the branching ratio, $\epsilon_k$ is the intrinsic peak efficiency, and $\mu_{\text{air}, k}$ is the air absorption coefficient.\footnote{The branching ratio $\alpha_k$ represents the probability that a radionuclide decays and emits a gamma ray with energy $E_k$. The likelihood that the gamma ray is photoelectrically absorbed within the scintillation crystal is captured by the intrinsic peak efficiency $\epsilon_k$. Finally, the air absorption coefficient $\mu_{\text{air}, k}$ quantifies the proportion of gamma rays absorbed or scattered by air molecules before reaching the detector.}

Gamma-ray emission induced by radioactive decay follows a Poisson distribution \citep{Knoll2010}. At $x \in \Omega$, the expected number of detected counts from a radionuclide point source with activity $\lambda$ over a period of $\tau$ seconds is $\mathbb{E}(n(x)) = \tau \lambda \hat{\phi}(x)$. Therefore, assuming isotropic particle emission, the count distribution follows
\begin{equation}
    n(x) \sim P[x](n) \doteq \frac{\left(\tau \lambda \hat{\phi}(x)\right)^n \ e^{-\tau \lambda \hat{\phi}(x)}}{n!}. \label{eq:poisson_distribution}
\end{equation}
For sufficiently high expected counts, $\hat{\phi}(x)\tau \lambda > 25$ according to \cite{Knoll2010}, $P(n)$ is well approximated by a normal distribution. Therefore, assuming sufficiently high activity, the source count distribution can be approximated as
\begin{align}
    \label{eq:gaussian_sensor_model}
    n(x) \sim \mathcal{N}(\tau \lambda \hat{\phi}(x),  \tau \lambda \hat{\phi}(x)).
\end{align}
Similarly, the background count $B(x)\tau$ in \eqref{eq:sensor_model} can also be approximated by a normal distribution assuming $B(x)\tau>25$ for all $x \in \Omega$ \citep{Knoll2010}. As demonstrated by \cite{Miller2025}, the background for a fixed altitude can be approximated as constant in the x- and y-direction over the operating regime of interest. The background CPS can therefore be represented by $B(x)\sim \mathcal{N}(b,b)$ for all $x \in \Omega$. Here, $b\in\mathbb{R}$ is the background CPS at height $h$, accumulated at each characteristic gamma-ray energy. Separating \eqref{eq:sensor_model} into its mean and noise term using \eqref{eq:gaussian_sensor_model} yields
\begin{equation}
    y(x) = \phi(x) + v(x), \qquad v(x) \sim \mathcal{N}(0, \sigma^2(x)),\label{eq:detector_measurement}
\end{equation}
where
\begin{align}
    \phi(x) & \doteq  \tau \lambda \hat{\phi}(x)+ \tau b,\label{eq:measurement_model} \\ \sigma^2(x) & \doteq \tau\lambda \hat{\phi}(x)+\tau b.\label{eq:noise_variance}
\end{align}
Here, $\phi:\Omega\to\mathbb{R}$ represents the scalar field of the mean radionuclide count, hereafter referred to as the radiation field, while $\sigma^2:\Omega\to\mathbb{R}$ denotes the corresponding measurement noise variance. The measurement noise is heteroscedastic due to its position dependence. Given \eqref{eq:radiation_field_model} and \eqref{eq:measurement_model}, it is then routine to show that for a single point source, the point $x^* \in \Omega$ closest to $q \in Q$ satisfies
\begin{align}
    \label{min_equiv}
    x^* = \argmin{x \in \Omega} \|x-q\|_2 = \argmax{x \in \Omega} \ \phi(x),
\end{align}
i.e., the expected radionuclide count measurement is maximised when the detector is nearest to the source. Note the equivalence in \eqref{min_equiv} is not guaranteed in general for multiple point sources.

\section{Bayesian Optimisation with Heteroscedastic Gaussian Processes}\label{sec:gp_with_heteroscedastic_noise}
Equation \eqref{min_equiv} demonstrates that source seeking may be recast as an optimisation problem over $\phi(\cdot)$. However, measurement collection times may be expensive (on the order of minutes per sample for low-activity sources) to ensure statistical significance within the measured energy spectrum. Compton scattering with the surrounding environment may also introduce secondary local maxima and added noise into $\phi(\cdot)$ within real-world activities. Hence, methods such as stochastic gradient descent \citep{stochastic_source_seeking} and extremum seeking control \citep{manzie2009extremum} may be unsuitable in efficiently locating $q$ within the radioactive source-seeking application.

\sloppy
An alternative approach that better leverages the heteroscedastic Gaussian structure of \eqref{eq:detector_measurement} is that of Bayesian Optimisation (BO), a sequential global optimisation method that aims to efficiently maximise some unknown function when measurements are expensive. This is achieved by constructing and analysing a probabilistic surrogate of the function, typically modelled via a Gaussian Process (GP). A GP, denoted $\mathcal{GP} (\mu(\cdot), k(\cdot, \cdot))$, can be considered a distribution of functions and is an infinite dimensional analogue of a multivariate Gaussian distribution that is fully defined by a mean function $\mu:\R^d \to \R$ and covariance function $k:\R^d \times \R^d \to \Rpos$. The selection of covariance function $k(\cdot,\cdot)$ is significant as it sets the class of functions that can be represented by the GP, see \cite{Rasmussen2006} for further details. Several covariance kernels have been used in GPs for radiation detection applications. These include the radial basis function in \cite{Khuwaileh2020}, Mat\'ern 3/2 in \cite{Silveira2018}, and inverse square in \cite{Jung2024}. 

We now consider radioactive source seeking as a BO problem. Assume the selection of $k(\cdot,\cdot)$ such that $\phi(\cdot) \sim \mathcal{GP} (0, k(\cdot, \cdot))$, where a prior mean of zero is chosen without loss of generality. At iteration $T$, the set of previous measurements $\mathbf{x}_T, \mathbf{y}_T$ can be used to condition the  GP and provide a lower variance posterior estimate in a process known as GP regression. First, the joint distribution of the training set $\mathbf{x}_T, \mathbf{y}_T$, and the unknown function $\phi(\cdot)$ evaluated at some test point $\tilde{x} \in \Omega$ is 
\begin{equation}
    \begin{pmatrix}
        \textbf{y}_T \\
        \phi(\tilde{x})
    \end{pmatrix} \sim \mathcal{N}\Bigg(
    \begin{pmatrix}
        \boldsymbol{0}_T \\
        0
    \end{pmatrix}, \begin{pmatrix}
        K_T+Q_T & \boldsymbol{k}_T(\tilde{x}) \\
        \boldsymbol{k}_T^\top(\tilde{x}) & k(\tilde{x}, \tilde{x})
    \end{pmatrix} \Bigg),\label{eq:GP_prior}
\end{equation}
where $\boldsymbol{0}_T$ is a $T\times 1$ zero matrix, $K_T=[k(x, x^\prime)]_{x, x^\prime \in \textbf{x}_T}$, $Q_T=\diag(\sigma^2(x_1),...,\sigma^2(x_T))$ with $\sigma^2(\cdot)$ as per \eqref{eq:noise_variance}, and $\boldsymbol{k}_T(x)=(k(x_1, x), ... , k(x_T, x))^\top$. 
The posterior over the unknown function $\phi(\cdot)\sim\mathcal{GP}(\mu_T(\cdot), \sigma_T^2(\cdot))$, conditioned on the $T$ measurements, has mean and covariance functions
\begin{align}
    \mu_T(x) &= \boldsymbol{k}_T^\top(x)(K_T+Q_T)^{-1}\textbf{y}_T,\label{eq:posterior_mean}\\
    \sigma^2_T(x) & = k(x, x)- \boldsymbol{k}_T^\top(x)(K_T+Q_T)^{-1}\boldsymbol{k}_T(x), \label{eq:posterior_std}
\end{align}
which has been adapted from \cite{Rasmussen2006} for heteroscedastic noise. 

At iteration $T$, the conditioned mean $\mu_T(\cdot)$ represents the current best approximation of $\phi(\cdot)$ given $\mathbf{x}_T, \mathbf{y}_T$. In parallel, $\sigma_T(\cdot)$ quantifies the relative uncertainty of $\phi(\cdot)$ over $\Omega$. This uncertainty quantification is a key strength of GP-based BO techniques, as it enables efficient sampling of regions with high expected information gain.

A popular algorithm that leverages the posterior mean and covariance for BO is the Gaussian Process Upper Confidence Bound (GP-UCB) algorithm from \cite{Srinivas2012}. The GP-UCB strategy is, for $0\leq t \leq T$,
\begin{equation}
    x_{t} \doteq \argmax{x\in\Omega}\big[\mu_{t-1}(x)+\beta_{t}^{1/2}\sigma_{t-1}(x)\big],\label{eq:gp_ucb_algorithm}
\end{equation}
where $\beta_{t}\in\mathbb{R}_{>0}$ is a time-varying parameter and $\mu_{t-1}(x)$ and $\sigma_{t-1}(x)$ denote the posterior GP mean and standard deviation, given by \eqref{eq:posterior_mean} and \eqref{eq:posterior_std} with homoscedastic noise, i.e., $\sigma^2(x)$ is constant for all $x\in\Omega$. One interpretation of \eqref{eq:gp_ucb_algorithm} is that it greedily selects a position $x_{t}$ that maximises the upper confidence bound on the GP posterior. Exploration and exploitation are balanced via the $\beta_{t}^{1/2}\sigma_{t-1}(x)$ and $\mu_{t-1}(x)$ terms, respectively. The growth of $\beta_{t}^{1/2}$ with $t$ promotes exploration by encouraging sampling in high-uncertainty regions where the true maximum may remain undetected.

\section{Bayesian Optimisation with \\ Movement Costs}\label{sec:gp_ucb_with_switching}
Notably, \eqref{eq:gp_ucb_algorithm} does not prioritise nearby sample points and can result in large displacements between consecutive measurements. In many applications, it is preferable to minimise unnecessary movement to conserve resources, such as UAV battery life.

To avoid large movements between consecutive measurement positions, \cite{Marchant2012} introduced a distance-dependent switching cost into a variant of GP-UCB known as the Gaussian Process Distance Upper Confidence Bound (GP-DUCB) algorithm. This switching cost is constant and proportional to Euclidean distance between points. However, theoretical guarantees for GP-DUCB are absent. To address these limitations we propose an alternate GP-DUCB strategy with time-dependent switching costs, and demonstrate corresponding sublinear regret guarantees. The strategy is, for $0 \leq t \leq T$, 
\begin{align}
    x_{t} \doteq \argmax{x\in\Omega}\left [\mu_{t\!-\!1}(x) + \underbrace{\beta_t^{1/2} \sigma_{t\!-\!1}(x)\left(1 \! - \! 2\rho_t ||x\! -\!x_{t\!-\!1}||^2_2\right)}_{\text{exploration term}}\right] 
    ,\label{eq:gp_ducb_algorithm}
\end{align}
where for  $0<C < \infty$, $t \mapsto \rho_t \in \mathcal{L}^\infty(\mathbb{N};(0, C])$ i.e $\rho_t \leq C$ for all $t$. Note, for $\|x-x_{t-1}\|_2 \geq \bar{r}(\rho_t) \doteq \left(2\rho_t\right)^{-1/2}$, the exploration term in \eqref{eq:gp_ducb_algorithm} becomes non-positive. Therefore, the quantity $\bar{r}(\rho_t)$ is dubbed the soft exploration radius since exploration-focused steps (i.e. steps dominated by the exploration term) are disincentivized outside of $\bar{r}(\rho_t)$. This aims to reduce sudden agent ``teleportation" when exploring the space. However, this penalisation only applies to the exploration term and large movements may still occur when exploiting locations with high posterior mean (i.e. $\mu_{t\!-\!1}(x)$ is large and $\sigma_{t-1}(x)$ is small). In addition, the soft exploration radius $\bar{r}(\rho_t)$ is directly influenced by selection of the sequence $\rho_t$. A possibility is to use a decreasing $\rho_t$  such that $\bar{r}(\rho_t)$ increases over time to encourage exploration of distant uncertain regions. The GP-DUCB algorithm is shown in pseudocode in Algorithm~\ref{alg:gp_ucb}. 

\begin{algorithm}
\caption{GP-DUCB}\label{alg:gp_ucb}
\quad\textbf{Input:} Region $\Omega$, $x_0\in\Omega$, $\mu_0$, $\sigma_0$, and kernel $k(\cdot,\cdot)$
\begin{algorithmic}
\For{$t = 1$ \textbf{to} $T$}
    \State Choose $x_t$ via \eqref{eq:gp_ducb_algorithm}.
    \State Obtain $y_t$ via \eqref{eq:detector_measurement}.
    \State Compute $\mu_{t}(\cdot)$ and $\sigma_{t}(\cdot)$ via \eqref{eq:posterior_mean} and \eqref{eq:posterior_std}.\label{line:update_gp}
\EndFor
\end{algorithmic}
\end{algorithm}

\subsection{Regret Analysis}\label{sec:sublinear_regret}
We now demonstrate that the GP-DUCB will globally optimise $\phi(\cdot)$. To support this analysis, it is useful to restate several useful bounds from \cite{Srinivas2012}. First, for $x_t \in D \subset \R^d$ where $D$ is compact for all $t \in \mathbb{N}$, Lemma 1 establishes a high-probability upper confidence bound on the conditioned GP estimate at sample locations.
\vspace{2mm}

\begin{lemma}[Lemma 5.5 in \cite{Srinivas2012}]
Pick $\delta\in(0, 1)$ and set $\beta_t=2\log(\pi_t/\delta)$, where $\sum_{t\geq 1}\pi_t^{-1}=1$, $\pi_t>0$. Then
\begin{equation*}
    |\phi(x_t)-\mu_{t-1}(x_t)|\leq \beta_t^{1/2}\sigma_{t-1}(x_t), \ \forall t \geq 1,
\end{equation*}
holds with probability $\geq 1 - \delta$. 
\end{lemma}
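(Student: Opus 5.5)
The argument is the standard one from \cite{Srinivas2012}: a Gaussian tail bound at each iteration, followed by a union bound over iterations. Fix $t \geq 1$ and condition on the history $\mathbf{x}_{t-1}, \mathbf{y}_{t-1}$. The next location $x_t$ is a deterministic function of this history under \eqref{eq:gp_ducb_algorithm}, or more generally under any strategy $f$. Given the history, the posterior in \eqref{eq:posterior_mean}--\eqref{eq:posterior_std} says that $\phi(x_t) \sim \mathcal{N}(\mu_{t-1}(x_t), \sigma_{t-1}^2(x_t))$. This uses the heteroscedastic matrix $Q_{t-1}$, which is fine: the conditional law stays Gaussian because $Q_{t-1}$ is a fixed function of the sampled locations. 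The only subtlety is that $Q_{t-1}$ depends on the unknown $\phi$ through \eqref{eq:noise_variance}. I will follow the paper's modelling convention and treat $\sigma^2(\cdot)$ as known for the purpose of GP regression.

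The per-iteration bound comes from the Gaussian tail. If $r \sim \mathcal{N}(0,1)$ and $c>0$, then
\begin{equation*}
\Pr\{|r|>c\} \leq e^{-c^2/2}.
\end{equation*}
To see this, write $\Pr\{r>c\} = e^{-c^2/2}(2\pi)^{-1/2}\int_c^\infty e^{-(r-c)^2/2 - c(r-c)}\,dr$ and bound $e^{-c(r-c)} \leq 1$; the remaining integral is $1/2$, so $\Pr\{r>c\} \leq \tfrac12 e^{-c^2/2}$, and doubling gives the two-sided bound.

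Apply this with $r = (\phi(x_t)-\mu_{t-1}(x_t))/\sigma_{t-1}(x_t)$ and $c = \beta_t^{1/2}$. The conditional probability that the bound in the lemma fails at step $t$ is then at most
\begin{equation*}
e^{-\beta_t/2} = \delta/\pi_t .
\end{equation*}
This bound does not depend on the history, so it also holds unconditionally. If $\sigma_{t-1}(x_t)=0$, then $\phi(x_t)=\mu_{t-1}(x_t)$ almost surely and the inequality holds trivially.

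A union bound over all $t \geq 1$ gives total failure probability at most $\sum_{t\geq1}\delta/\pi_t = \delta$. Hence the inequality holds for all $t$ simultaneously with probability at least $1-\delta$.

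The main point needing care is the conditioning step. The union bound must be applied to events whose probabilities are controlled uniformly even though $x_t$ is random. This works because the tail bound holds conditionally for every realisation of the history, and the tower property then gives the unconditional bound. The switching-cost term in \eqref{eq:gp_ducb_algorithm} plays no role, since it only affects how $x_t$ is chosen as a function of the history.
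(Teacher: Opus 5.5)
Your proposal is correct and is essentially the argument behind the cited result; the paper gives no proof of its own and defers to Lemmas 5.1 and 5.5 of \cite{Srinivas2012}, which use a Gaussian tail bound $e^{-\beta_t/2}=\delta/\pi_t$ at each step, valid conditionally on every history, followed by a union bound over $t\geq 1$. Your caveat about the heteroscedastic noise is well placed: by \eqref{eq:measurement_model} and \eqref{eq:noise_variance} one actually has $\sigma^2(x)=\phi(x)$, so the exact Gaussian posterior used in your conditioning step requires treating $\sigma^2(\cdot)$ as known, an assumption the paper makes tacitly.
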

Lemma 2 then establishes a similar high-probability upper confidence bound for some arbitrary $x^* \in D$. This is complicated by the infinite cardinality of $D$ and relies on considering a finite-cardinality subset $D_t \subset D$, with size ($\tau_t)^d$ for $\tau_t\in\mathbb{R}$, which becomes dense in $D$ as $t \to \infty$.

\vspace{2mm}
\begin{lemma}[Lemma 5.7 in \cite{Srinivas2012}]
    Consider $x_t \in D \subset \R^d$. Pick $\delta\in(0, 1)$ and constants $a,b > 0$. Set $\beta_t=2\log(2\pi_t/\delta)+4d\log(dtbr\sqrt{\log(2da/\delta)})$, where $\sum_{t\geq 1}\pi_t^{-1}=1$, $\pi_t>0$. Let $\tau_t=dt^2br\sqrt{\log(2da/\delta)}$. Let $[x^*]_t$ denote the closest point in $D_t$ to $x^*$. Then
\begin{equation*}
    |\phi(x^*)-\mu_{t-1}([x^*]_t)|\leq\beta_t^{1/2}\sigma_{t-1}([x^*_t])+\frac{1}{t^2}, \ \forall t \geq 1,
\end{equation*}
holds with probability $\geq 1-\delta$. 
\end{lemma}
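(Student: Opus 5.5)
The statement to prove is Lemma 2, a restatement of Lemma 5.7 of \cite{Srinivas2012}. I would follow the discretisation argument used there. It has three ingredients: a union bound over a finite grid, a Lipschitz-type control of $\phi(\cdot)$ in high probability, and rounding $x^*$ to its nearest grid point.

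\textbf{Step 1: a high-probability Lipschitz bound on $\phi(\cdot)$.} I would assume, as in \cite{Srinivas2012}, that $D\subset[0,r]^d$ and that the kernel gives derivative tails $\Pr\{\sup_{x\in D}|\partial\phi/\partial x_j|>L\}\leq a e^{-(L/b)^2}$ for $j=1,\dots,d$. A union bound over the $d$ coordinates, with $L=b\sqrt{\log(2da/\delta)}$, gives with probability $\geq 1-\delta/2$
\begin{equation*}
|\phi(x)-\phi(x')|\leq b\sqrt{\log(2da/\delta)}\,\|x-x'\|_1,\quad \forall x,x'\in D .
\end{equation*}

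\textbf{Step 2: a confidence bound on a finite grid.} Let $D_t$ be a uniform grid with $\tau_t$ points per coordinate, so $|D_t|=\tau_t^d$. Conditioned on the data, $\phi(x)\sim\mathcal{N}(\mu_{t-1}(x),\sigma^2_{t-1}(x))$. This holds for the heteroscedastic posterior \eqref{eq:posterior_mean}--\eqref{eq:posterior_std} exactly as for the homoscedastic one, because $Q_T$ is a known diagonal matrix. The Gaussian tail bound $\Pr\{|z|>c\}\leq e^{-c^2/2}$ then applies. Taking a union bound over $D_t$ and over $t$, with failure budget $\delta/(2\pi_t|D_t|)$ per point, gives
\begin{equation*}
|\phi(x)-\mu_{t-1}(x)|\leq\beta_t^{1/2}\sigma_{t-1}(x),\quad\forall x\in D_t,\ \forall t,
\end{equation*}
with probability $\geq1-\delta/2$, provided $\beta_t\geq 2\log(2\pi_t\tau_t^d/\delta)$. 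With $\tau_t=dt^2br\sqrt{\log(2da/\delta)}$ we get $d\log\tau_t\leq 2d\log(dtbr\sqrt{\log(2da/\delta)})$, because $t^2\leq(dtbr\sqrt{\cdot})^2$ once that quantity is at least one. This matches the stated $\beta_t$, with the factor $4d$ absorbing this.

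\textbf{Step 3: rounding and combining.} The rounded point satisfies $\|x^*-[x^*]_t\|_1\leq dr/\tau_t$. On the intersection of the two events, which has probability $\geq1-\delta$, Step 1 gives $|\phi(x^*)-\phi([x^*]_t)|\leq b\sqrt{\log(2da/\delta)}\,dr/\tau_t=1/t^2$. The triangle inequality together with Step 2 at $[x^*]_t$ then yields the claim.

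The main obstacle is Step 1. The Lipschitz tail condition is not implied by the setup as written: the constants $a,b$ appear without definition, and the radiation field \eqref{eq:radiation_field_model} is not itself a GP sample path. I would therefore state explicitly that the kernel is assumed to satisfy the derivative-tail condition of \cite{Srinivas2012} on compact $D\subseteq[0,r]^d$; stationary kernels such as the squared exponential and Mat\'ern with $\nu>2$ satisfy it. The rest of the argument carries over verbatim, since heteroscedasticity only changes the posterior formulas, not their Gaussianity.
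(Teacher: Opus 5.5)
Your proposal is correct and follows the same discretisation argument as the proof of Lemma 5.7 in \cite{Srinivas2012}, which the paper cites rather than reproduces: the coordinate-wise union bound with $L=b\sqrt{\log(2da/\delta)}$ at failure level $\delta/2$, the grid confidence bound (their Lemma 5.6) at level $\delta/2$ with $|D_t|=\tau_t^d$, and the rounding $\|x^*-[x^*]_t\|_1\le rd/\tau_t$ that produces the $1/t^2$ slack. As in the original, you tacitly need $D$ convex for the mean-value step and $dbr\sqrt{\log(2da/\delta)}\ge 1$ so that $4d\log(dtbr\sqrt{\cdot})\ge 2d\log\tau_t$; your phrase ``once that quantity is at least one'' should refer to this $t$-independent quantity, not to $dtbr\sqrt{\cdot}$.
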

Finally, Lemma 3 extends a key high-probability bound in \cite{Srinivas2012} to the heteroscedastic setting.
\vspace{2mm}

\begin{lemma}[Adapted from Lemma 5.4 in \cite{Srinivas2012} for the setting]
    Pick $\delta\in(0, 1)$ and let $\beta_t=2\log(|D|\pi_t/\delta)$, where $\sum_{t\geq 1}\pi_t^{-1}=1$, $\pi_t>0$. Then, the following holds with probability $\geq 1-\delta$
\begin{equation*}
    \sum^T_{t=1}4\beta_t\sigma^2_{t-1}(x_t)\leq C_1\beta_T\gamma_T, \ \forall T \geq 1,
\end{equation*}
where $C_1 = \max_{1< t \leq T}\big(8/\log(1+\sigma^{-2}(x_t))\big)$. 
\end{lemma}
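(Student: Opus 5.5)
We follow the route of Lemma 5.3 and Lemma 5.4 in \cite{Srinivas2012}, replacing the constant noise variance with the location-dependent variance $\sigma^2(x_t)$ from \eqref{eq:noise_variance}. The first step is an information-gain identity for the heteroscedastic GP. Let $\mathbf{f}_T=(\phi(x_1),\dots,\phi(x_T))$. The mutual information between $\mathbf{y}_T$ and $\mathbf{f}_T$ is $I(\mathbf{y}_T;\mathbf{f}_T)=\tfrac12\log\det(I+Q_T^{-1/2}K_TQ_T^{-1/2})$. Using the chain rule $H(\mathbf{y}_T)=H(\mathbf{y}_{T-1})+H(y_T\mid\mathbf{y}_{T-1})$, and noting that $y_T\mid\mathbf{y}_{T-1}$ is Gaussian with variance $\sigma^2_{T-1}(x_T)+\sigma^2(x_T)$, we obtain
\begin{equation*}
I(\mathbf{y}_T;\mathbf{f}_T)=\frac12\sum_{t=1}^T\log\bigl(1+\sigma^{-2}(x_t)\sigma^2_{t-1}(x_t)\bigr).
\end{equation*}
This holds because the posterior variance \eqref{eq:posterior_std} with $Q_T$ is exactly the conditional variance of $\phi(x_t)$ given the past observations, and the noise entropy $\tfrac12\log(2\pi e\sigma^2(x_t))$ cancels term by term. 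Here $\gamma_T$ is understood as the maximum of this quantity over sample sets of size $T$ in $D$.

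The second step bounds each summand. We use $\sigma^2_{t-1}(x_t)\le k(x_t,x_t)\le 1$, assuming the kernel is normalised as in \cite{Srinivas2012}, so that $s_t:=\sigma^{-2}(x_t)\sigma^2_{t-1}(x_t)\le\sigma^{-2}(x_t)$. On $[0,\sigma^{-2}(x_t)]$ the map $s\mapsto s/\log(1+s)$ is increasing, which gives $s_t\le \frac{\sigma^{-2}(x_t)}{\log(1+\sigma^{-2}(x_t))}\log(1+s_t)$. Multiplying by $\sigma^2(x_t)$ yields
\begin{equation*}
\sigma^2_{t-1}(x_t)\le \frac{1}{\log(1+\sigma^{-2}(x_t))}\log\bigl(1+\sigma^{-2}(x_t)\sigma^2_{t-1}(x_t)\bigr).
\end{equation*}

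The third step sums these bounds. Since $\beta_t$ is nondecreasing in $t$, we have $4\beta_t\sigma^2_{t-1}(x_t)\le \beta_T\cdot\frac{8}{\log(1+\sigma^{-2}(x_t))}\cdot\tfrac12\log(1+s_t)$. Bounding the $t$-dependent factor by its maximum $C_1$ and applying the first step gives $\sum_t 4\beta_t\sigma^2_{t-1}(x_t)\le C_1\beta_T I(\mathbf{y}_T;\mathbf{f}_T)\le C_1\beta_T\gamma_T$. This inequality is in fact deterministic. The ``with probability $\ge1-\delta$'' qualifier and the specific choice of $\beta_t$ are inherited only so that the statement can be combined with Lemma 1 or Lemma 2 in the regret bound.

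The main obstacle is the information-gain identity under heteroscedastic noise. It must be checked that the conditional-entropy decomposition still telescopes when $\sigma^2(x_t)$ depends on $\phi(x_t)$ through \eqref{eq:noise_variance}. Strictly, that dependence makes the noise non-Gaussian given $\phi$. We would handle this as the model does: treat $\sigma^2(\cdot)$ as a known deterministic function, for example evaluated at a plug-in estimate. The noise is then independent Gaussian with known variances, and the determinant formula follows directly. A minor subtlety is the index range in $C_1$, which should include $t=1$. If this is omitted, the first term can be absorbed separately by noting $\sigma_0^2(x_1)\le1$.
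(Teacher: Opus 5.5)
Your proposal is correct. It follows the adaptation of Lemmas 5.3--5.4 of \cite{Srinivas2012} that the paper invokes without writing out a proof: the chain-rule (equivalently, Schur-complement) identity with $\sigma^2(x_t)$ in place of $\sigma^2$, the per-term bound from monotonicity of $s\mapsto s/\log(1+s)$, and $C_1$ taken as a maximum over $t$. You are also right that the inequality is deterministic and that the maximum in $C_1$ must include $t=1$.

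The one step to flag is $\beta_t\le\beta_T$. It needs $\pi_t$ nondecreasing, which holds for $\pi_t=\pi^2t^2/6$ and for the $\beta_t$ of Theorem~\ref{Theorem1}, but does not follow from $\sum_t\pi_t^{-1}=1$ alone.
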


We now consider a regret formulation of the radioactive source-seeking problem. Consider the general static cumulative regret with switching cost, as adapted from \cite{Goel2019}, given a cost maximisation objective
\begin{equation*}
    R_T = \sum^T_{t=1}g(x^*_t) - c(x^*_t, x^*_{t-1}) - \Big[\sum_{t=1}^Tg(x_t) - c(x_t, x_{t-1})\Big],
\end{equation*}
where $g:\Omega \rightarrow \mathbb{R}$ is the hitting cost, $c: \Omega \times \Omega \rightarrow \mathbb{R}$ is the switching cost, and $x^*_t\in \Omega$ is the optimal position for $t \in \{1, ..., T\}$. In the context of radioactive source seeking, the hitting cost represents the loss incurred by not evaluating the function at its maximiser $x^*_t = x^*$ as per (\ref{min_equiv}) for $t\in\{1, ..., T\}$, implying $g(x) = \phi(x)$ for all $x \in \Omega$. The instantaneous regret formulation for a spatially static unknown function is
\begin{equation}
    r_t = \phi(x^*) - \phi(x_t) + c(x_t, x_{t-1}),\label{eq:instant_regret}
\end{equation}
where the movement switching cost is
\begin{equation}
c(x_t, x_{t-1})=2\beta_t^{1/2}\sigma_{t-1}(x_t)\rho_t||x_t-x_{t-1}||^2_2,
\end{equation}
as per (\ref{eq:gp_ducb_algorithm}), and the cumulative regret is
\begin{align}
    \label{cummul_reg}
    R_T = \sum^T_{t=1} r_t.
\end{align}
It is now possible to state the main result of the paper.
\vspace{2mm}

\begin{theorem}[Main Result]
\label{Theorem1}
Let $\Omega \subset [0, r]^d$ be compact and convex, $d\in \mathbb{N}$, $r>0$. Suppose that the kernel $k(x, x^\prime)$ satisfies the following high probability bound on the derivatives of GP sample paths $\phi(\cdot)$ for all $j = 1, ..., d$
\begin{equation}
    \text{Pr}\Big\{\sup_{x\in \Omega}\Big|\frac{\partial \phi}{\partial x_j}\Big| > L \Big\} \leq a \exp(-(L/b)^2), \ \label{eq:lipschitz_gp_condition}
\end{equation}
for some constants $a, b, L > 0$. Pick $\delta \in (0, 1)$, and define
\begin{align*}
    \beta_t &= 2\log(2t^2\pi^2/(3\delta)) + 2d \log\Big(t^2dbr\sqrt{\log(4da/\delta)}\Big),\\
    \tau_t &= dt^2br\sqrt{\log(2da/\delta)},
\end{align*}
and $0\leq\rho_t\leq C$ is bounded above for all $t\in\{1, ..., T\}$. Then, under the dynamics in \eqref{eq:gp_ducb_algorithm}, the cumulative regret $R_T$, as per \eqref{cummul_reg}, is bounded by a sublinear function of $T$ such that, for $0 < \delta < 1$,
\begin{equation}
    \label{eq:sublinear_regret}
    \text{Pr}\Big\{R_T \leq (1+Cr^2d)\sqrt{C_1T\beta_T\gamma_T} + \frac{\pi^2}{6} \ \ \forall T \geq 1\Big\} \geq 1 - \delta,
\end{equation}
where $C_1 = \max_{1< t \leq T}\big(8/\log(1+\sigma^{-2}(x_t))\big)$, and $\gamma_T$ is the maximum information gain after $T$ rounds.
\end{theorem}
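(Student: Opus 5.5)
We follow the structure of the continuous-domain GP-UCB regret proof of \cite{Srinivas2012} and treat the movement penalty as a bounded perturbation of the exploration term. First we fix the high-probability events. We apply Lemma~2 with failure probability $\delta/2$ and the discretisation $D_t$ of size $\tau_t^d$. We apply Lemma~1 (equivalently Lemma~3's confidence event) with failure probability $\delta/2$ and $\pi_t = \pi^2 t^2/6$. The stated choice of $\beta_t$ dominates the $\beta_t$ required by each lemma, so by a union bound all confidence statements hold simultaneously for all $t\geq 1$ with probability at least $1-\delta$. We work on this event for the rest of the argument.

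Second, we bound the instantaneous regret. Write $p_t \doteq 2\rho_t\|x_t-x_{t-1}\|_2^2$. Since $\Omega\subset[0,r]^d$, we have $\|x_t-x_{t-1}\|_2^2\leq r^2 d$, so $0\leq p_t\leq 2Cr^2d$. By Lemma~2 with $[x^*]_t\in D_t\subset\Omega$,
\begin{equation*}
\phi(x^*) \leq \mu_{t-1}([x^*]_t)+\beta_t^{1/2}\sigma_{t-1}([x^*]_t)+t^{-2}.
\end{equation*}
The difficulty is that the acquisition rule \eqref{eq:gp_ducb_algorithm} compares $x_t$ against $[x^*]_t$ only through the penalised index. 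It therefore gives
\begin{equation*}
\mu_{t-1}(x_t)+\beta_t^{1/2}\sigma_{t-1}(x_t)(1-p_t)\geq \mu_{t-1}([x^*]_t)+\beta_t^{1/2}\sigma_{t-1}([x^*]_t)\bigl(1-2\rho_t\|[x^*]_t-x_{t-1}\|^2_2\bigr).
\end{equation*}
The penalty at $[x^*]_t$ enters with the wrong sign. This is the step we expect to be the main obstacle.

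Our first attempt is to bound that penalty using $\|[x^*]_t-x_{t-1}\|^2\leq r^2 d$. This leaves a term of order $Cr^2d\,\beta_t^{1/2}\sigma_{t-1}([x^*]_t)$, which must be summed. If that fails, we expect the intended argument simply drops the comparator's penalty, i.e. treats the penalised index at $[x^*]_t$ as at most the unpenalised UCB. Combined with Lemma~1, $\phi(x_t)\geq\mu_{t-1}(x_t)-\beta_t^{1/2}\sigma_{t-1}(x_t)$, and adding $c(x_t,x_{t-1})=\beta_t^{1/2}\sigma_{t-1}(x_t)p_t$, this yields
\begin{equation*}
r_t \leq 2\beta_t^{1/2}\sigma_{t-1}(x_t)+\beta_t^{1/2}\sigma_{t-1}(x_t)p_t+t^{-2}\leq (1+Cr^2d)\,2\beta_t^{1/2}\sigma_{t-1}(x_t)+t^{-2}.
\end{equation*}
Here $2+2Cr^2d=2(1+Cr^2d)$, which matches the constant in \eqref{eq:sublinear_regret}.

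Finally, we sum over $t$ and apply Cauchy--Schwarz:
\begin{equation*}
\sum_{t=1}^T 2\beta_t^{1/2}\sigma_{t-1}(x_t)\leq\sqrt{T\sum_{t=1}^T4\beta_t\sigma_{t-1}^2(x_t)}.
\end{equation*}
Since $\beta_t$ is nondecreasing and $\sigma^2(x)\geq\tau b>0$, Lemma~3 bounds the inner sum by $C_1\beta_T\gamma_T$. Together with $\sum_t t^{-2}\leq \pi^2/6$, this gives $R_T\leq(1+Cr^2d)\sqrt{C_1T\beta_T\gamma_T}+\pi^2/6$ for all $T$ on the good event, i.e. with probability at least $1-\delta$. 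Sublinearity then follows whenever $\gamma_T$ grows sublinearly, as it does for the standard kernels treated in \cite{Srinivas2012}.
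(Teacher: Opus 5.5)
\paragraph{There is a genuine gap, at exactly the step you flagged.}
Write $\tilde U_t(x)=\mu_{t-1}(x)+\beta_t^{1/2}\sigma_{t-1}(x)\bigl(1-2\rho_t\|x-x_{t-1}\|_2^2\bigr)$ for the penalised index and $U_t(x)=\mu_{t-1}(x)+\beta_t^{1/2}\sigma_{t-1}(x)$ for the plain UCB.

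\begin{itemize}
\item The argmax in \eqref{eq:gp_ducb_algorithm} gives only $\tilde U_t(x_t)\ge\tilde U_t([x^*]_t)$.
\item Lemma~2 bounds $\phi(x^*)$ by $U_t([x^*]_t)+t^{-2}$, and $U_t([x^*]_t)\ge\tilde U_t([x^*]_t)$.
\item These two facts cannot be chained. Your fallback of dropping the comparator's penalty is therefore not a looser version of a valid step; it asserts an inequality that does not follow.
\end{itemize}

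Done correctly, the penalty at $x_t$ cancels the switching cost $c(x_t,x_{t-1})$ exactly. On the good event this gives
\[
r_t\le 2\beta_t^{1/2}\sigma_{t-1}(x_t)+2\rho_t\beta_t^{1/2}\sigma_{t-1}([x^*]_t)\,\|[x^*]_t-x_{t-1}\|_2^2+t^{-2}.
\]
So your $(1+Cr^2d)$ factor, which you get from $c(x_t,x_{t-1})$ after also discarding the penalty at $x_t$, is not where the difficulty lies.

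The real obstruction is the second term. It involves the posterior standard deviation at the unqueried point $[x^*]_t$. Lemma~3 controls only $\sum_t\sigma_{t-1}^2(x_t)$ at queried points. The crude bound $\sigma_{t-1}([x^*]_t)\le k(x,x)^{1/2}$ leaves a contribution of order $Cr^2d\,T\beta_T^{1/2}$, which is not sublinear. So your first attempt does not close either, and the proposal does not establish \eqref{eq:sublinear_regret}.

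\paragraph{Comparison with the paper.}
The paper follows your first attempt. It keeps the comparator's penalty, uses $\|[x^*]_t-x_{t-1}\|_2\le r\sqrt d$, and finishes, as you do, with Cauchy--Schwarz and Lemma~3. In the paper the factor $\rho_t r^2 d$ comes from the comparator's penalty, not from $c(x_t,x_{t-1})$. The matching constant is therefore no evidence that your fallback is the intended step.

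However, the paper's displayed bound $r_t\le 2\beta_t^{1/2}\sigma_{t-1}(x_t)\bigl(1+\rho_t\|[x^*]_t-x_{t-1}\|_2^2\bigr)+t^{-2}$ attaches that penalty to $\sigma_{t-1}(x_t)$ instead of $\sigma_{t-1}([x^*]_t)$. This substitution is precisely what makes Lemma~3 applicable. Nothing in the argmax or in Lemmas~1--2 implies $\sigma_{t-1}([x^*]_t)\le\sigma_{t-1}(x_t)$. Indeed, when $\|[x^*]_t-x_{t-1}\|_2>\bar r(\rho_t)$, larger uncertainty at $[x^*]_t$ \emph{lowers} its index, so the algorithm has no incentive to reduce it.

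The ingredient you identified as missing, a way to sum $\beta_t^{1/2}\sigma_{t-1}([x^*]_t)$, is thus not supplied by the paper's argument either. Closing the proof needs such a bound, or a stronger hypothesis than $\rho_t\le C$. For example, under $\sum_{t\le T}\rho_t\beta_t^{1/2}=o(T)$ the crude bound above already yields sublinear regret.
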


\begin{proof}
The proof follows a similar structure to \cite{Srinivas2012}. Consider the sequence of discretisations $\Omega_t\subset \Omega$ of size $(\tau_t)^d$, such that for every $x\in \Omega$, $||x-[x]_t||_1 \leq rd/\tau_t$. Using Lemmas 1 and 2, and the GP-DUCB strategy in (\ref{eq:gp_ducb_algorithm}), the instantaneous regret as per (\ref{eq:instant_regret}) can be bounded as
\begin{equation*}
    r_t \leq 2\beta_t^{1/2}\sigma_{t-1}(x_t)\big(1+\rho_t||[x^*]_t-x_{t-1}||^2_2\big)+ \frac{1}{t^2}.
\end{equation*}
By taking the worst-case bound $||[x^*]_t-x_{t-1}||_2 \leq r\sqrt{d}$ and using the Cauchy-Schwarz inequality, the cumulative regret can be upper bounded by 
\begin{equation*}
    R_T \leq \Big[\sum_{t=1}^T (1+ \rho_tr^2d)^2\Big]^{1/2}\Big[\sum_{t=1}^T (2\beta_t^{1/2}\sigma_{t-1}(x_t))^2\Big]^{1/2}+\frac{\pi^2}{6},
\end{equation*}
where the $\frac{\pi^2}{6}$ term is due to Euler's evaluation of the Basel sum. By recalling that $\rho_t\leq C$ for all $1\leq t \leq T$, and applying Lemma 3, the upper confidence bound on $R_T$ can be simplified as per the theorem statement, i.e. \eqref{eq:sublinear_regret}. 
\end{proof}

Theorem~\ref{Theorem1} shows that the cumulative regret of the proposed GP-DUCB strategy grows sublinearly with the number of measurements.
\vspace{2mm}

\begin{corollary}
\label{corollary:convergence_to_optimal}
Under \eqref{eq:gp_ducb_algorithm}, for $0<\delta<1$, with probability at least $1-\delta$,
    \begin{align*}
         \lim_{T\to \infty} \frac{1}{T}\sum^T_{t=1} \left[c(x_t,x_{t-1})\right] =  \lim_{T\to \infty} \frac{1}{T}\sum^T_{t=1} \left[\phi(x_t)-\phi(x^*) \right] = 0.
    \end{align*}
\end{corollary}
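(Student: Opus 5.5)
We derive the corollary from Theorem~\ref{Theorem1} by dividing the regret bound \eqref{eq:sublinear_regret} by $T$ and letting $T\to\infty$. We work throughout on the event $\mathcal{E}$ of probability at least $1-\delta$ on which \eqref{eq:sublinear_regret} holds for all $T\geq 1$. That event is uniform in $T$, so we can take limits on it directly without a union bound over $T$. On $\mathcal{E}$ we have
\begin{equation*}
\frac{1}{T}R_T \leq (1+Cr^2d)\sqrt{\frac{C_1\beta_T\gamma_T}{T}} + \frac{\pi^2}{6T}.
\end{equation*}
Since $\beta_T=O(\log T)$ and $\gamma_T$ is sublinear for the kernels considered, as in \cite{Srinivas2012}, the right-hand side tends to zero. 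This gives $\limsup_{T\to\infty} R_T/T\leq 0$.

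Next we split $R_T$ into two sums, each with nonnegative terms:
\begin{equation*}
R_T=\sum_{t=1}^T\big[\phi(x^*)-\phi(x_t)\big]+\sum_{t=1}^T c(x_t,x_{t-1}).
\end{equation*}
The first summand is nonnegative because $x^*$ maximises $\phi$ over $\Omega$ by \eqref{min_equiv}. The second is nonnegative because $\beta_t^{1/2}>0$, $\sigma_{t-1}\geq 0$ and $\rho_t\geq 0$. Hence for each $T$,
\begin{equation*}
0\leq \frac{1}{T}\sum_{t=1}^T c(x_t,x_{t-1})\leq \frac{R_T}{T} \quad\text{and}\quad 0\leq \frac{1}{T}\sum_{t=1}^T\big[\phi(x^*)-\phi(x_t)\big]\leq \frac{R_T}{T}.
\end{equation*}
By the squeeze theorem both averages converge to zero on $\mathcal{E}$. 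The sign flip $\phi(x_t)-\phi(x^*)$ in the statement does not change the limit being zero.

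The main obstacle is the sublinearity of $\gamma_T$ together with the boundedness of $C_1$. Sublinearity of $\gamma_T$ is not part of the theorem and must be imported from the known information-gain bounds of \cite{Srinivas2012}, for example $O((\log T)^{d+1})$ for the squared exponential kernel. That is what makes $\sqrt{\beta_T\gamma_T/T}\to 0$. For $C_1$, which is a maximum over $t\leq T$, we need it bounded uniformly in $T$. This follows because the noise variance $\sigma^2(x)=\tau\lambda\hat\phi(x)+\tau b$ is bounded above on $\Omega$: $\hat\phi$ is continuous on the compact plane region at height $h>0$ and hence bounded. An upper bound on $\sigma^2$ bounds $\sigma^{-2}$ away from zero, so $\log(1+\sigma^{-2}(x_t))$ is bounded away from zero and $C_1$ is uniformly bounded. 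With these two facts the limit argument above goes through.
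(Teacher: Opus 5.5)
Your proof is correct and follows the same route as the paper: you divide the uniform-in-$T$ regret bound of Theorem~\ref{Theorem1} by $T$, then use nonnegativity of both $\phi(x^*)-\phi(x_t)$ and $c(x_t,x_{t-1})$ to force each average to zero, and your squeeze formulation is cleaner than the paper's sign manipulation. You also spell out what the paper's ``follows directly'' leaves implicit: $C_1$ is uniformly bounded because $\sigma^2(\cdot)$ is bounded on $\Omega$, and one actually needs $\beta_T\gamma_T=o(T)$ rather than mere sublinearity of $\gamma_T$, which the bounds of \cite{Srinivas2012} do supply (they carry over to the heteroscedastic information gain because $\sigma^2(\cdot)\geq\tau b>0$).
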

\begin{proof}
    It follows directly from \eqref{eq:sublinear_regret} in Theorem~\ref{Theorem1} that
    \begin{align*}
        \lim_{T\to \infty} \frac{1}{T}\sum^T_{t=1} \left[ \phi(x_t)-\phi(x^*) -c(x_t,x_{t-1})\right] = 0.
    \end{align*}
    Noting that $\phi(x) - \phi(x^*) \leq 0$ for all $x \neq x^*$
    \begin{align*}
         \lim_{T\to \infty} \frac{1}{T}\sum^T_{t=1} \left[c(x_t,x_{t-1})\right] =  \lim_{T\to \infty} \frac{1}{T}\sum^T_{t=1} \left[\phi(x_t)-\phi(x^*) \right]\leq 0.
    \end{align*}
    However, this expression must hold with equality as $c(\cdot, \cdot)$ is a non-negative function. The corollary assertion then follows.
\end{proof}
Therefore, as per Corollary \ref{corollary:convergence_to_optimal}, the UAV becomes asymptotically stationary (on average) and the average measurement approaches the global maximum of $\phi(\cdot)$ under the GP-DUCB strategy in \eqref{eq:gp_ducb_algorithm}. Note, the sublinear regret bound in Theorem 1 still admits exploration away from $x^*$ under the strategy, however the frequency of such exploratory deviations decreases over time. 

\section{Simulation Results}\label{sec:simulation_results}
This section empirically demonstrates radioactive source seeking using the GP-DUCB algorithm in (\ref{eq:gp_ducb_algorithm}) through simulation. Simulations were designed to emulate the flight trials of the UAV-mounted radiation detection system in \cite{Miller2025} as the radiation field model in (\ref{eq:measurement_model}) was experimentally validated in this work. The simulation environment considers the problem of localising a radioactive source within the rectangular region defined by $\mathcal{R} = [0, 20 ] \text{ m} \times [0, 20] \text{ m}\times [0, 3] \text{ m}$. The scintillation-based detector measurement model in (\ref{eq:detector_measurement}) was used to simulate the radiation field generated by a 1000 MBq $^{137}$Cs source positioned at $q=(15 \text{ m}, 15 \text{ m})$ from a flight height of $h=3$ m. Radioactive decay of $^{137}$Cs emits a single ($N = 1$) characteristic 662 keV gamma ray with a probability of $\alpha_1=85.10 \%$ \citep{NNDC2025}. The corresponding air absorption coefficient is $\mu_{\text{air}, k}=9.95 \times 10^{-5}$ cm$^{-1}$ \citep{NISTXCOM}.

This work considers a NaIL scintillation crystal (95\% $^6$Li enriched lithium co-doped NaI:Tl) with radius $r = 2.54$~cm and height $\eta = 5.08 $ cm in the measurement model of (\ref{eq:detector_measurement}). The intrinsic peak efficiency for the 662 keV gamma ray was $\epsilon_1=24.42\%$ \citep{Miller2025}. The radiation field was simulated for a measurement time of $\tau = 10$ s and the Gaussian noise $v(\cdot)$ was generated using the noise variance $\sigma^2(\cdot)$ according to \eqref{eq:noise_variance}. A constant background radiation of $b = 3.07$~counts was used. This value was measured experimentally at 3 m flight height using a NaIL scintillation crystal \citep{Miller2025}. The background data was imported and rescaled using the measurement time of $\tau = 10$ s. Counts around the 662 keV photopeak were calculated by summing the background counts within three standard deviations corresponding to an energy resolution of 7.1\%. 

The heteroscedastic GP was implemented in Python using the GPy Python library (v1.13.2) from \cite{GPy2024}. Simulations used the Mat\'ern 5/2 covariance kernel, which satisfies the GP derivative condition in Theorem \ref{Theorem1}, i.e., \eqref{eq:lipschitz_gp_condition} \citep{Srinivas2012}. For $x, x^\prime \in \Omega$, the kernel is defined as
\begin{equation*}
\begin{aligned}
k(x, x') &\doteq \Big(1+\frac{\sqrt{5}|| x-x' ||}{l} +\frac{5||x-x'||^2}{3l^2}\Big) \\
&\quad \times \exp\Big(-\frac{\sqrt{5}|| x-x' ||}{l}\Big),
\end{aligned}
\end{equation*}
where $l$ is the length scale. The GP-DUCB algorithm was implemented following Algorithm~\ref{alg:gp_ucb} with $\delta=0.1$, and was initialised at the platform position of $x_0 = (5 \text{ m}, 2.5 \text{ m})$ with a corresponding radiation measurement. For computational efficiency, the GP posterior was computed using the GPy GPHeteroscedasticRegression class. The algorithm was executed for $t\in\{1, ..., T\}$ iterations, where $T$ was determined by the UAV battery-life constraint 
\begin{equation}
    \tau + \sum_{t=1}^T\Big[\frac{1}{u}||x_t-x_{t-1}||_2 + \tau \Big]\leq \text{12 minutes},\label{eq:stopping_condition}
\end{equation}
with a flight speed of $u=1$ m s$^{-1}$. This stopping condition represents a realistic operational timescale, enabling a comparison between flight trajectories with low movement switching costs and those generated under higher switching costs.

\begin{figure}[h!]
    \centering
    % ---------- Row 1 ----------
    \begin{subfigure}{\linewidth}
        \centering
        \begin{subfigure}{0.49\linewidth}
            \centering
            \includegraphics[width=\linewidth]{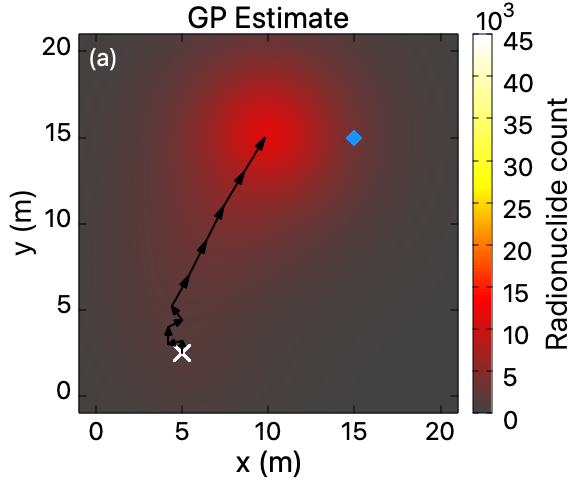}
        \end{subfigure}
        \hfill
        \begin{subfigure}{0.49\linewidth}
            \centering
            \includegraphics[width=\linewidth]{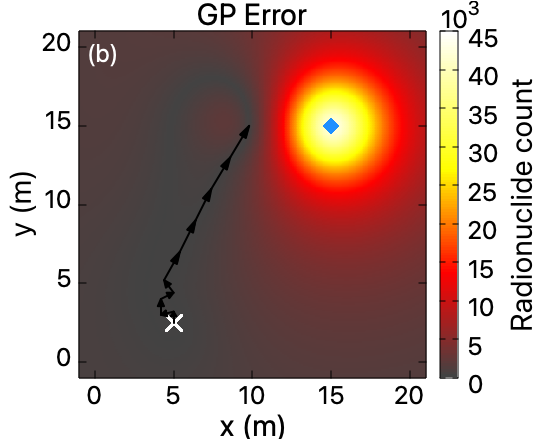}
        \end{subfigure}
        \caption*{2 minute flight time.}
    \end{subfigure}

    \vspace{0.75em}

    % ---------- Row 2 ----------
    \begin{subfigure}{\linewidth}
        \centering
        \begin{subfigure}{0.49\linewidth}
            \centering
            \includegraphics[width=\linewidth]{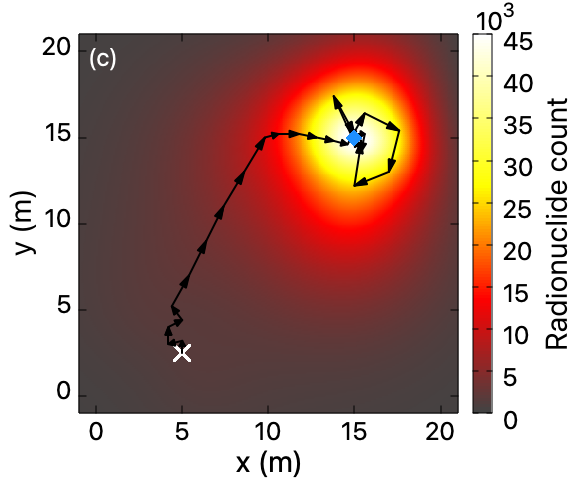}
        \end{subfigure}
        \hfill
        \begin{subfigure}{0.49\linewidth}
            \centering
            \includegraphics[width=\linewidth]{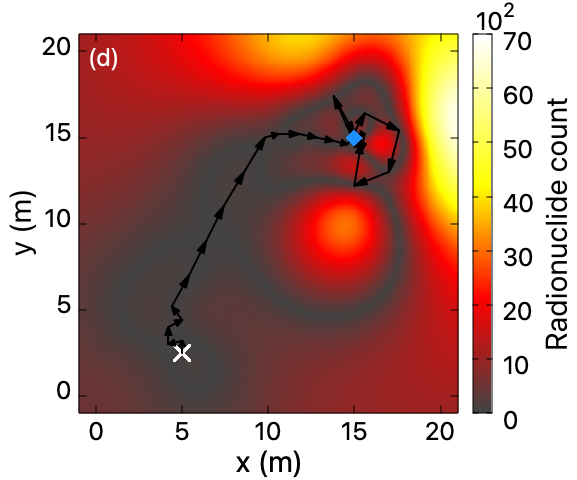}
        \end{subfigure}
        \caption*{6 minute flight time.}
    \end{subfigure}

    \vspace{0.75em}

    % ---------- Row 3 ----------
    \begin{subfigure}{\linewidth}
        \centering
        \begin{subfigure}{0.49\linewidth}
            \centering
            \includegraphics[width=\linewidth]{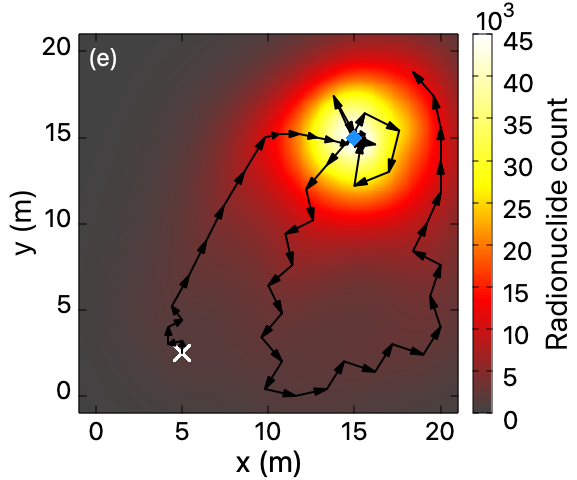}
        \end{subfigure}
        \hfill
        \begin{subfigure}{0.49\linewidth}
            \centering
            \includegraphics[width=\linewidth]{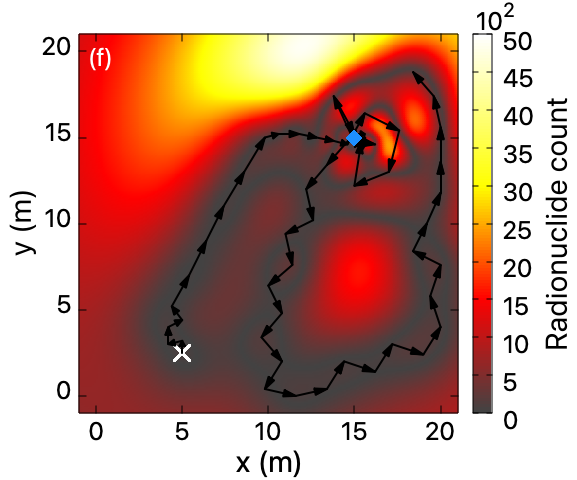}
        \end{subfigure}
        \caption*{Completed flight.}
    \end{subfigure}

    \caption{Time snapshots of the GP mean and error along the flight trajectory (arrows) generated using the GP-DUCB algorithm in (\ref{eq:gp_ducb_algorithm}) with $\rho_t = 0.0313$ for $t \in \{1, ..., T\}$ and $T$ as per (\ref{eq:stopping_condition}). The x-mark and diamond indicate the initial platform and source positions, respectively.}
    \label{fig:scenario_1}
\end{figure}

Figure~\ref{fig:scenario_1} presents time snapshots of the flight trajectory over the $^{137}$Cs source environment using the GP-DUCB algorithm with a constant switching cost of $\rho_t=0.0313$ (soft exploration radius $\bar{r}=4$ m). The left column illustrates the GP mean estimate of the radiation field, while the right column depicts the absolute error with respect to the true radiation field, determined by (\ref{eq:measurement_model}) without background radiation. At 2 minutes into operation, the GP mean exhibits a spurious maximum, indicated by high GP error in Figure~\ref{fig:scenario_1}b. By 6 minutes, the global maximum is localised and the algorithm starts exploring regions of higher uncertainty (see Figure~\ref{fig:scenario_1}c). This leads to the completed flight trajectory, where measurement collection in uncertain areas improves the accuracy of the radiation field estimate, as indicated in Figure~\ref{fig:scenario_1}f.

\begin{figure}[h!]
    \centering
    % --- Row 1 ---
    \begin{subfigure}{0.49\linewidth}
        \centering
        \includegraphics[width=\linewidth]{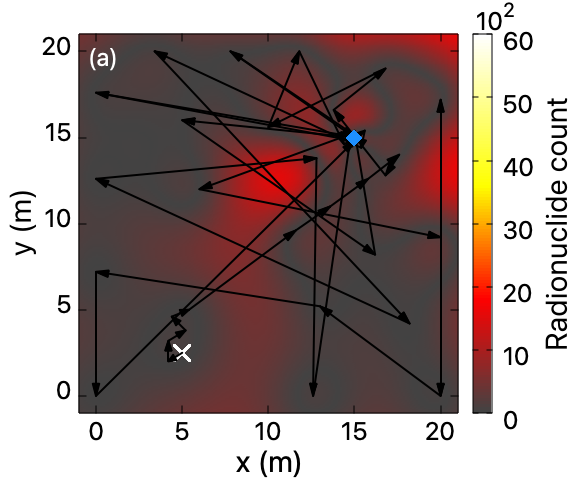}
        \caption*{$\rho_t = 0$.}
        \label{fig:scenario1_rho0}
    \end{subfigure}
    \hfill
    \begin{subfigure}{0.49\linewidth}
        \centering
        \includegraphics[width=\linewidth]{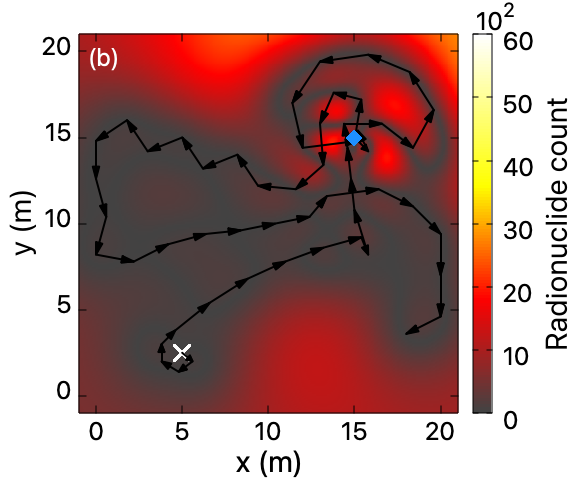}
        \caption*{$\rho_t = 0.02$.}
        \label{fig:scenario1_rho002}
    \end{subfigure}

    \vspace{0em}

    % --- Row 2 ---
    \begin{subfigure}{0.49\linewidth}
        \centering
        \includegraphics[width=\linewidth]{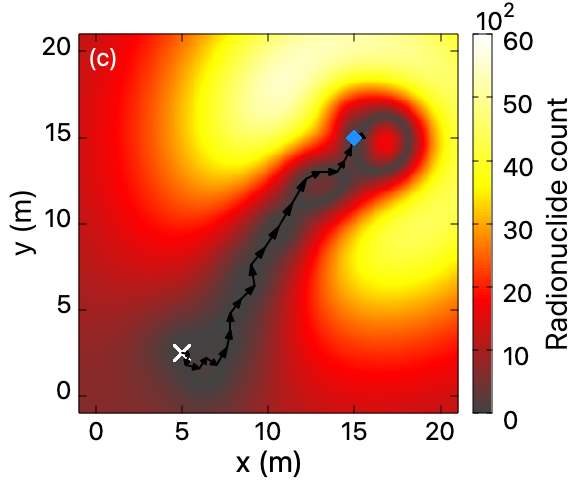}
        \caption*{$\rho_t = 0.125$.}
        \label{fig:scenario1_rho05}
    \end{subfigure}
    \hfill
    \begin{subfigure}{0.49\linewidth}
        \centering
        \includegraphics[width=\linewidth]{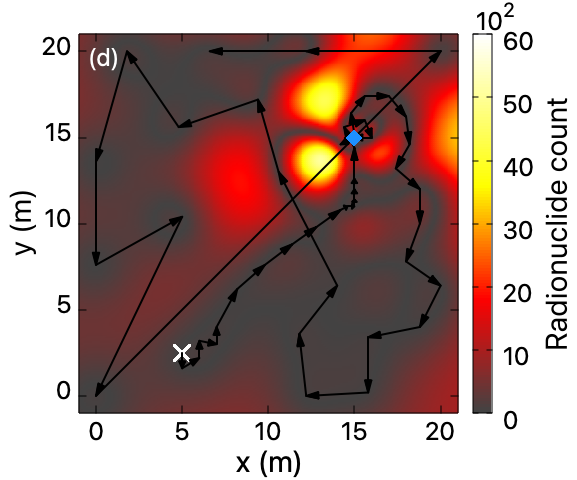}
        \caption*{$\rho_t = 0.125\exp(-\frac{(t-20)^2}{100})$.}
        \label{fig:scenario1_rho_var}
    \end{subfigure}
    \caption{GP error along flight trajectories (arrows) generated by the GP-DUCB algorithm in (\ref{eq:gp_ducb_algorithm}) with different $\rho_t$ values for $t \in \{1, ..., T\}$ and $T$ as per (\ref{eq:stopping_condition}). The x-mark and diamond indicate the initial platform and source positions, respectively.}
    \label{fig:scenario_1_time_varying}
\end{figure}

\begin{figure}[h!]
\centering
\includegraphics[width=0.9\linewidth]{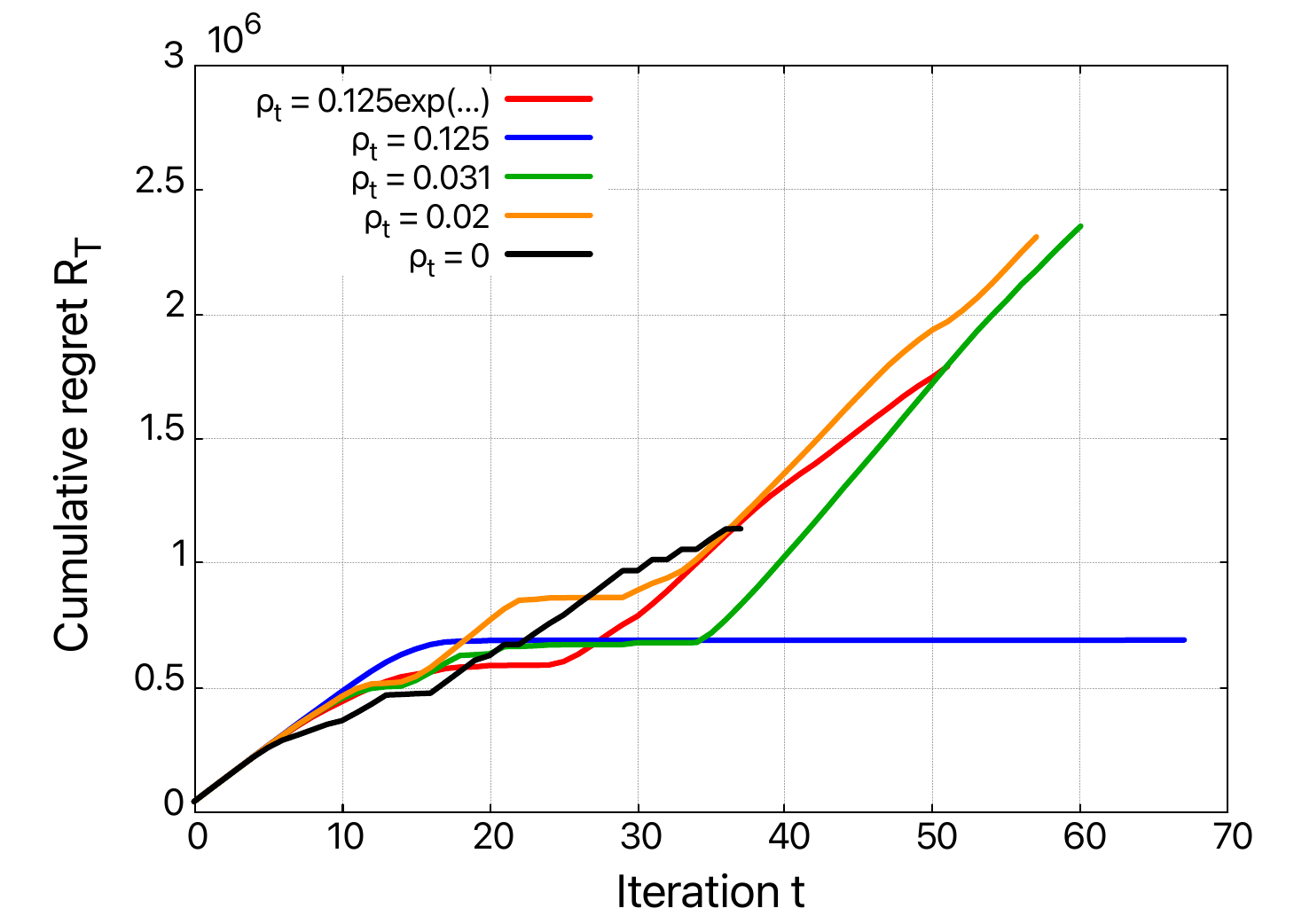}
\caption{Cumulative regret as per (\ref{cummul_reg}) for the flight trajectories generated by the GP-DUCB algorithm with varying $\rho_t$ for $t \in \{1, ..., T\}$ with $T$ defined in (\ref{eq:stopping_condition}).}
\label{fig:regret_scenario_1}
\end{figure}
\unskip

Figure~\ref{fig:scenario_1_time_varying} illustrates the impact of varying the switching cost $\rho_t$ in the GP-DUCB algorithm on the flight trajectory. With $\rho_t=0$ (see Figure~\ref{fig:scenario_1_time_varying}a), corresponding to the GP-UCB strategy, the platform moves extensively between measurements. Increasing the cost to $\rho_t=0.02$ ($\bar{r}=5$ m, Figure~\ref{fig:scenario_1_time_varying}b) yields a more compact and predictable path. A higher cost $\rho_t=0.125$ ($\bar{r}=2$ m, Figure~\ref{fig:scenario_1_time_varying}c) drives the platform rapidly to the source with limited exploration, increasing the uncertainty around the maximum. The time-varying cost $\rho_t=0.125\exp(-(t-20)^2/100)$ in Figure~\ref{fig:scenario_1_time_varying}d, enforces a soft exploration radius of $\bar{r}=2$~m at $t=20$, promoting source seeking initially before transitioning to broader exploration. This behaviour aligns with the cumulative regret plot in Figure~\ref{fig:regret_scenario_1}. Initially, the regret in the time-varying case increases until it stablises near the function maximum due to increased cost, before resuming exploration and increasing the cumulative regret.

\begin{figure}[h!]
\includegraphics[width=0.495\linewidth]{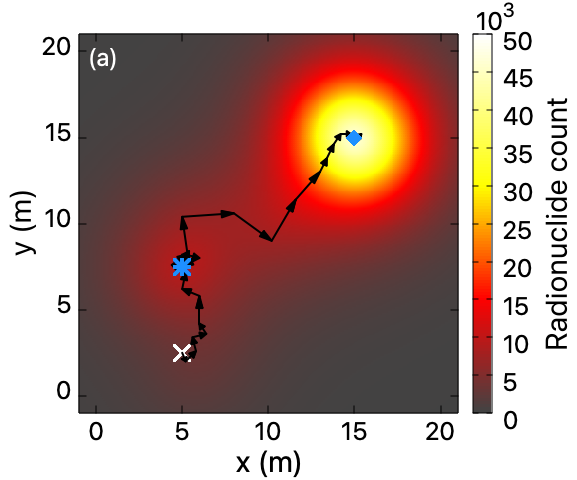}
\includegraphics[width=0.495\linewidth]{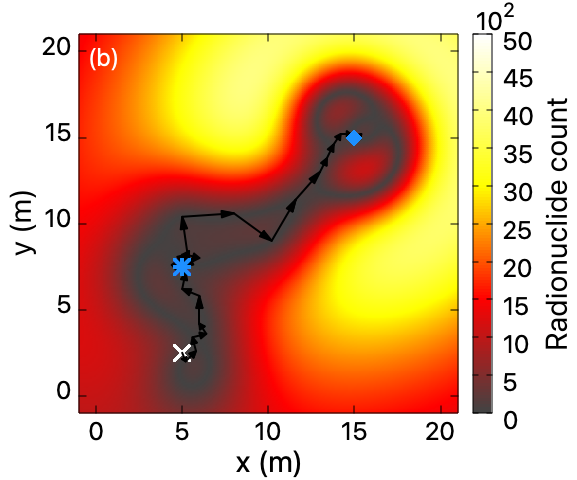}
\caption{GP mean (a) and error (b) along the trajectory (arrows), $T$ as per \eqref{eq:stopping_condition}, generated as per \eqref{eq:gp_ducb_algorithm} with constant $\rho_t = 0.005$. The $^{137}$Cs source positions are indicated by the diamond (primary) and star (secondary) markers, while the x-mark indicates the initial platform position.
}
\label{fig:scenario_2}
\end{figure}
\unskip

A significant benefit of BO is global maximisation. This is demonstrated in Figure~\ref{fig:scenario_2} with the introduction of an additional $^{137}$Cs source (150 MBq) at $q=(5 \text{ m}, 7.5 \text{ m})$. The flight trajectory was generated with $\rho_t=0.005$ ($\bar{r}= 10$ m), and shown to localise the global maximum of the radiation field after exploring the local maximum around the secondary $^{137}$Cs source.

\section{Conclusions}\label{sec:conclusion}
This work proposes a sample-efficient method for radioactive source seeking that utilises Bayesian optimisation with heteroscedastic Gaussian process regression. This method extends the Gaussian Process Upper Confidence Bound (GP-UCB) framework of \cite{Srinivas2012} by incorporating a movement switching cost to discourage excessive inter-sample travel between measurement points, and a heteroscedastic GP to account for the noise characteristics of radiation measurements. This modified strategy, dubbed Gaussian Process Distance Upper Confidence Bound (GP-DUCB), was shown to achieve sublinear regret. Simulation studies demonstrated the performance of the algorithm under both constant and time-varying switching costs. In addition, the algorithm successfully localised the global maximum of the radiation field in a dual-radioactive source environment, despite initially exploring a local maximum. Future work will examine extensions to multi-source localisation and the integration of multi-agent systems for cooperative source seeking. 

\bibliography{export}

\end{document}